\title{An Algorithm for Consensus Trees}
\author{Pongsaphol Pongsawakul \\ \href{mailto:pongsaphol@pongsaphol.com}{pongsaphol@pongsaphol.com}}
\newtheorem{lemma}{Lemma}
\newtheorem{claim}{Claim}
\newtheorem{theorem}{Theorem}
\begin{document}
 
\maketitle

\begin{abstract}
We consider the tree consensus problem, an important problem in bioinformatics.   Given a rooted tree $t$ and another tree $T$, one would like to incorporate compatible information from $T$ to $t$.  This problem is a subproblem in the tree refinement problem called the RF-Optimal Tree Refinement Problem defined by in Christensen, Molloy, Vachaspati and Warnow [WABI'19] who employ the greedy algorithm by Gawrychowski, Landau, Sung, and Weimann [ICALP'18] that runs in time $O(n^{1.5}\log n)$.  We give a faster algorithm for this problem that runs in time $O(n\log n)$.  Our key ingredient is a bipartition compatibility criteria based on amortized-time leaf counters.  While this is an improvement, the fastest solution is an algorithm by Jansson, Shen, and Sung [JACM'16] which runs in time $O(n)$.
\end{abstract}

\section{Introduction}

We consider the tree consensus problem, an important problem in bioinformatics.  Given a rooted tree $t$ and another rooted tree $T$, we would like to combine ``information'' from $T$ into $t$.  More over, we would like to only greedily take information that is currently consistent with our current $t$.  (See definitions below.)
This problem is a subproblem in the tree refinement problem called RF-Optimal Tree Refinement Problem defined by in Christensen, Molloy, Vachaspati and Warnow~\cite{ChristensenMVW-traction19} who employ the greedy algorithm by Gawrychowski, Landau, Sung, and Weimann~\cite{GawrychowskiLSW18} that runs in time $O(n^{1.5}\log n)$.  We give a faster algorithm for this problem that runs in time $O(n\log n)$.  Our key ingredient is a bipartition compatibility criteria based on amortized-time leaf counters.  While this is an improvement, the fastest solution is an algorithm by Jansson, Shen, and Sung~\cite{JanssonSS16} which runs in time $O(n)$.

The algorithm by Gawrychowski {\em et al}~\cite{GawrychowskiLSW18} works in a more general case where the goal is the find the greedy consensus trees from $k$ trees.  In this case, their algorithm runs in time $O(kn^{1.5}\log n)$, an improvement over $O(kn^2)$ of Jansson {\em et al}~\cite{JanssonSS16}.  For this problem, Sung~\cite{Sung18} also present an algorithm that runs in time $O(k^2 n)$, improving over Gawrychowski {\em et al}~\cite{GawrychowskiLSW18} when $k=O(\sqrt{n}\log n)$.

In an earlier version, we erroneously claimed that our algorithm works for the case with many trees.  We thank Pawel Gawrychowski and Oren Weiman for pointing this out.  Jittat Fakcharoenphol who help advising the author on this manuscript would like to take the full responsibility for this mistake.

\section{Definitions}

We start with definitions related to trees and consistency.  

Let $V(T), E(T)$ and $R(T)$ be vertex set, edge set and the root of tree $T$. 
For every vertex $u \in V(T)-\{R(T)\}$, let $par(u)$ be parent node of node $u$.
For every vertex $u \in V(T)$, let $depth(u)$ be depth of node $u$. We can denote as $depth(u) = depth(par(u)) + 1$, for every vertex $u \in V(T)-\{R(T)\}$ and $depth(R(T)) = 1$.
For every vertex $u \in V(T)$, let $L(u)$ be set of all leaves on subtree $u$. 
Let $size(u) = |L(u)|$ for each $u \in V(T)$.
For each node $u \in V(T)-\{R(T)\}$, we call $\Lambda(u)$ be set of {\em bipartition} at edge $(u, par(u))$. 
For each bipartition, $\Lambda(u)$, we can represent into two clusters, $A = L(u)$ and $B = L(R(T))-L(u)$, and denoted by $A|B$.
The set of bipartitions of $T$ can denoted by $C(T) = \{\Lambda(u) : u \in V(T)-\{R(T)\} \}$.
Let $RF(T_a, T_b)$ be {\em Robinson-Foulds} distance between trees $T_a$ and $T_b$. RF-distance can denoted by $RF(T_a, T_b) = |C(T_a)-C(T_b)|$.

The set $S$ of bipartitions is compatible if there exists tree $T$ such that $C(T) = S$.
leaves set $A$ is compatible with $t'$ when have node $u$ that for all $v \in child(u)$ if only if $L(v) \cap S = \emptyset$ or $L(V) \subseteq S$.

%

\section{The \texorpdfstring{$O(n^2)$}{} algorithm}

In this section, we describe a simpler version of the algorithm and prove its correctness and its running time (in Subsection~\ref{sect:basic-running-time}).  
We improve its running time in Section~\ref{sect:faster-alg}

We assume that both $T$ and the current $t'$ are equipped with a data structure that given an id of a bipartition $b$, find vertex $u$ in the trees such that $\Lambda(u)=b$.  Since $t'$ changes over time, we assume that our data structures can handle the tree update efficiently.  
We discuss this in Subsection~\ref{sect:update}.

The main loop of the algorithm iterates over all bipartition of $T$ recursively and, if possible, add each bipartition to $t'$. 

We define variables used in the main loop. 
Let $z$ be the node that have minimum depth in tree $t'$ for each bipartition in $T$.
The main loop is described below.

\begin{algorithm}[H]
	\caption{Main loop}
	\begin{algorithmic}[1]
	    \For{each node $u \in V(T)$}
	        \State {\sc ClearCounter()} 
	        \State $z\leftarrow$ {\sc UpdateCounterSubtree}($u$)  \Comment{Update Step, referred to in Section~\ref{sect:faster-alg}}
	        \If{{\sc IsCompatible}($t, u, z$)}
	            \State update $t'$
	        \EndIf
	    \EndFor
	\end{algorithmic}
\end{algorithm}

The main loop uses the following function.

\begin{algorithm}[H]
	\caption{{\sc UpdateCounterSubtree}($u$)}
	\begin{algorithmic}[1]
        \State $z\leftarrow null$     
        \For{each node $v \in L(u)$}
            \State $p \leftarrow$ {\sc UpdateCounterLeaf}($v,t'$) \Comment{After this step we say that $v$ has been {\em added}.}
            \If{$z = null$ or $depth(p) < z$}
                \State $z \leftarrow p$
            \EndIf
        \EndFor
        \State return $z$
	\end{algorithmic}
\end{algorithm}

Our algorithm maintains variable $counter$ for each vertex in $t'$.  We also keep a list of dirty vertices so that {\sc ClearCounter} can run in $O(1)$ time.  Variable $counter$ is updated in function {\sc UpdateCounterLeaf}.  Note that $counter$ changes over time as {\sc UpdateCounterSubtree} keeps adding leaves in $L(u)$.   The algorithm ensures that $counter(u)$ is exactly as follows.  If $u$ is a leaf vertex, we let
\[
 counter(u) = 
  \begin{cases}
   1 & \text{if } v \text{ has been added} \\
   0 & \text{otherwise} 
  \end{cases} 
\]

For other internal vertex $u$, we let
\[
 counter(u) = 
  \sum_{v \in child(u)}
  \begin{cases} 
    counter(v) & \text{if } counter(v) = |L(v)| \\
    0      & \text{otherwise} 
  \end{cases}
\]

For each call this function, it can take amortized $O(1)$ time (to be proved later). The following algorithm describes function {\sc UpdateCounterLeaf}.

\begin{algorithm}[H]
	\caption{{\sc UpdateCounterLeaf}($v,t'$)}
	\begin{algorithmic}[1]
	    \State $counter(v) \leftarrow 1$
	    \While{$counter(v)=size(v)$}
	        \State $p \leftarrow par(v)$
	        \State $counter(p) \leftarrow counter(p) + counter(v)$
	        \State $v \leftarrow p$
	    \EndWhile
	    \State return $v$
	\end{algorithmic}
\end{algorithm}

Next, we have algorithm that check that add from previous algorithm is compatible to $t'$. 
We can track the node $u$ that have minimum depth and have $counter(u) > 0$ in the while loop at the previous algorithm. 

\begin{algorithm}[H]
    \caption{{\sc IsCompatible}($t', u, z$)}
    \begin{algorithmic}[1]
        \If{$counter(z) = |L(u)|$}
            \State return YES
        \Else
            \State return NO
        \EndIf
    \end{algorithmic}
\end{algorithm}

We prove the correctness of the algorithm.  We note that it considers all bipartitions.

\begin{lemma}
The main loop considers all bipartitions in $T$.
\end{lemma}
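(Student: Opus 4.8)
The statement to prove is that the main loop considers all bipartitions in $T$. The plan is to establish a bijection between the bipartitions of $T$ and the iterations of the main loop, then argue that each bipartition gets examined by the compatibility check. First I would recall from the definitions that $C(T) = \{\Lambda(u) : u \in V(T) - \{R(T)\}\}$, so every bipartition of $T$ corresponds to exactly one non-root vertex $u$ of $T$, and conversely every non-root vertex $u$ yields the bipartition $L(u) \mid (L(R(T)) - L(u))$. The main loop iterates \texttt{For each node} $u \in V(T)$, so it touches every vertex exactly once; I would note that the iteration over $u = R(T)$ is harmless (the trivial bipartition, which every tree already contains, or which the compatibility test simply accepts or ignores), so no bipartition is missed and none is double-counted.

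The second step is to verify that within the iteration for a fixed non-root $u$, the algorithm actually tests the bipartition $\Lambda(u)$ against $t'$ rather than some other cluster. Here I would trace through the calls: \textsc{ClearCounter} resets all counters, \textsc{UpdateCounterSubtree}$(u)$ adds precisely the leaves of $L(u)$ (iterating $v \in L(u)$ and calling \textsc{UpdateCounterLeaf} on each), and returns $z$, the minimum-depth vertex of $t'$ touched during these additions. Then \textsc{IsCompatible}$(t', u, z)$ checks $counter(z) = |L(u)|$, i.e. whether the leaves of $L(u)$ exactly fill out the subtree rooted at $z$ in $t'$ — which is exactly the condition that the cluster $L(u)$ (hence the bipartition $\Lambda(u)$) is compatible with $t'$. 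So each bipartition is not merely "considered" in a bookkeeping sense but is genuinely tested for insertion.

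The main obstacle I anticipate is a subtle ordering issue: since $t'$ is mutated inside the loop (the ``update $t'$'' step), the bipartitions already added change the tree against which later bipartitions are tested. I would need to argue this does not cause any bipartition of $T$ to be skipped — the loop index ranges over $V(T)$, which is fixed and independent of $t'$, so mutation of $t'$ affects the outcome of \textsc{IsCompatible} but never the set of vertices $u$ enumerated. A secondary point worth a sentence is that the recursion/iteration structure of \textsc{UpdateCounterSubtree} indeed enumerates all of $L(u)$ and nothing outside it, so the counter state correctly encodes the cluster $L(u)$; this follows directly from the stated invariant on $counter(\cdot)$. Assembling these observations — every non-root vertex of $T$ is enumerated exactly once, and each enumeration tests the corresponding bipartition — gives the lemma.
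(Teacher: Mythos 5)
Your proposal is correct and rests on the same observation as the paper's own (very terse) proof: $C(T)=\{\Lambda(u): u\in V(T)-\{R(T)\}\}$ by definition, and the loop enumerates all of $V(T)$, so every bipartition is visited. Your additional discussion of what happens inside each iteration and of the mutation of $t'$ is sound but goes beyond what this lemma claims (those points belong to the subsequent compatibility lemma), so the core argument matches the paper's.
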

\begin{proof}
From definition, $\Lambda(u)$ can represent all bipartitions of $T$ and for each bipartition, we have consider all node in $\Lambda(u)$. From this, we can consider all bipartitions in $T$. 
\end{proof}

For clarity, for each vertex $w\in t'$ we denote by $L_{t'}(w)$ its leaf set in $t'$.
Note each call to {\sc UpdateCounterLeaf} increases $counter$ for each vertex by at most 1, and we call {\sc UpdateCounterLeaf} exactly $|L(u)|$ times.  This implies the next lemma, which can be formally proven by induction.  

\begin{claim}
During the call of {\sc UpdateCounterSubtree($u$)}, for any vertex $v\in t'$, $counter(v)\leq |L(u)\cap L_{t'}(v)|$.  Moreover, if $L_{t'}(v)\subseteq L(u)$, $counter(v)=|L(u)\cap L_{t'}(v)|= |L_{t'}(v)|$, i.e., the counter attains its maximum.
\label{lem:upperboundcounter}
\end{claim}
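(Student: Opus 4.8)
The plan is to prove both assertions by induction on the structure of $t'$, but more fundamentally by tracking the total number of calls made to {\sc UpdateCounterLeaf} during {\sc UpdateCounterSubtree}$(u)$. The key observation to establish first is that after $k$ of the $|L(u)|$ leaf-insertions have been performed, the leaves that have been "added" are exactly some subset $S_k \subseteq L(u)$ with $|S_k| = k$, and that for every vertex $v \in t'$ the invariant $counter(v) = |S_k \cap L_{t'}(v)|$ whenever $counter(v) = |L_{t'}(v)|$, and $counter(v) \le |S_k \cap L_{t'}(v)|$ always. I would set this up as the induction hypothesis and show it is preserved by a single call to {\sc UpdateCounterLeaf}$(v,t')$.

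First I would handle the base case: before any leaf is added all counters are zero (guaranteed by {\sc ClearCounter}), so both bounds hold trivially with $S_0 = \emptyset$. For the inductive step, consider the call {\sc UpdateCounterLeaf}$(v,t')$ that adds leaf $v \in L(u)$. This sets $counter(v) \leftarrow 1$; since $v$ is a leaf of $t'$, $|L_{t'}(v)| = 1 = |L(u) \cap L_{t'}(v)|$ (as $v \in L(u)$), so the claim holds at $v$. The while loop then walks up the path from $v$ toward the root, and at each step moves from a child $c$ to its parent $p$ only when $counter(c) = size(c) = |L_{t'}(c)|$. By the induction hypothesis this equality means every leaf below $c$ in $t'$ has already been added, i.e. $L_{t'}(c) \subseteq S_k$. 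The increment $counter(p) \leftarrow counter(p) + counter(c)$ then adds exactly $|L_{t'}(c)|$ to $counter(p)$; I would argue this is precisely the contribution that the recursive definition of $counter(p)$ attributes to the now-"full" child $c$, so the new value of $counter(p)$ equals $\sum_{c' \in child(p)} [\,counter(c') = |L_{t'}(c')|\,]\cdot counter(c')$, and the loop terminates (or continues upward) exactly when $p$ itself becomes full. Summing the contributions of the full children and invoking the induction hypothesis on each gives $counter(p) = \sum |L_{t'}(c') \cap S_{k+1}|$ over full children $c'$, which is a lower bound on $|L_{t'}(p) \cap S_{k+1}|$ in general, and an exact count when every child is full (hence $p$ is full and $L_{t'}(p) \subseteq S_{k+1}$). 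Vertices not on this path are untouched, so their invariants persist.

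For the "moreover" clause I would specialize to the end of {\sc UpdateCounterSubtree}$(u)$, where $S_{|L(u)|} = L(u)$ (all leaves of $L(u)$ have been added — this uses the earlier remark that {\sc UpdateCounterLeaf} is called exactly $|L(u)|$ times, once per leaf). If $L_{t'}(v) \subseteq L(u)$, then $L(u) \cap L_{t'}(v) = L_{t'}(v)$, and I need $counter(v) = |L_{t'}(v)|$. This follows by a secondary induction on the height of $v$ in $t'$: every child $c$ of $v$ also satisfies $L_{t'}(c) \subseteq L_{t'}(v) \subseteq L(u)$, so by the inductive hypothesis $counter(c) = |L_{t'}(c)|$, meaning $c$ is full; since this holds for all children, the while loop invocation for the last-added leaf under $v$ will have propagated all the way through $v$, giving $counter(v) = \sum_{c} |L_{t'}(c)| = |L_{t'}(v)|$.

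The main obstacle I anticipate is making the bookkeeping around the while loop rigorous: the counters of ancestors are updated incrementally across many separate calls to {\sc UpdateCounterLeaf}, and one must be careful that the partial value of $counter(p)$ at any intermediate moment is exactly $\sum_{c \,:\, counter(c) = |L_{t'}(c)|} counter(c)$ — i.e. that a child contributes to its parent's counter \emph{if and only if} it has become full, and contributes its full value at the moment it becomes full and never before. Pinning down "the counter of $c$ is added to $p$ exactly once, precisely when $c$ first reaches $size(c)$" is the crux; once that is clean, both the inequality and the tightness condition fall out of the two nested inductions described above.
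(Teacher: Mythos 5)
Your proposal is correct, and it follows exactly the route the paper gestures at: the paper does not actually write out a proof of this claim, saying only that it ``can be formally proven by induction'' on the sequence of calls to {\sc UpdateCounterLeaf}. Your induction over the set $S_k$ of added leaves, together with the key bookkeeping fact that a child contributes its full value $|L_{t'}(c)|$ to its parent's counter exactly once, at the moment it first becomes full, is precisely the missing argument, and your secondary induction on height for the ``moreover'' clause is also sound.
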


The following the key lemma.

\begin{lemma}
   Let $z'$ be the least common ancestor of leaf vertices in $L(u)$ in $t'$.
   After {\sc UpdateCounterSubtree($u$)} is called, leaf set $L(u)$ is compatible with the current $t'$ if and only if $counter(z')=|L(u)|$.
\end{lemma}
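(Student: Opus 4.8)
The plan is to first determine the exact value of each counter at the moment \textsc{UpdateCounterSubtree}($u$) returns, and then read off the compatibility condition at $z'$. The first step is to upgrade Claim~\ref{lem:upperboundcounter} to a characterization: after the call, a vertex $v\in t'$ has $counter(v)=|L_{t'}(v)|$ \emph{if and only if} $L_{t'}(v)\subseteq L(u)$. The ``if'' direction is the ``moreover'' part of Claim~\ref{lem:upperboundcounter}; the ``only if'' direction follows from $counter(v)\le|L(u)\cap L_{t'}(v)|\le|L_{t'}(v)|$, so that equality forces $|L(u)\cap L_{t'}(v)|=|L_{t'}(v)|$. Substituting this into the recursive definition of $counter$ for an internal vertex $w$ yields the closed form
\[
  counter(w)=\!\!\sum_{\substack{v\in child(w)\\ L_{t'}(v)\subseteq L(u)}}\!\! |L_{t'}(v)|
           \;=\;\!\!\sum_{\substack{v\in child(w)\\ L_{t'}(v)\subseteq L(u)}}\!\! |L(u)\cap L_{t'}(v)| ,
\]
which, applied at $w=z'$, is the identity driving the whole proof. (For a leaf $v$ the same formulas hold trivially, which is the base case of the easy induction on $t'$.)

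Next I record the ``geometry'' of $z'$. Since $z'$ is the least common ancestor in $t'$ of the leaves of $L(u)$, we have $L(u)\subseteq L_{t'}(z')$, hence the disjoint decomposition $L(u)=\bigsqcup_{v\in child(z')}\bigl(L(u)\cap L_{t'}(v)\bigr)$ and $|L(u)|=\sum_{v\in child(z')}|L(u)\cap L_{t'}(v)|$. (The degenerate case $|L(u)|=1$, where $z'$ is that single leaf, is immediate: both $counter(z')=|L(u)|$ and compatibility hold.) For the ($\Leftarrow$) direction, assume $counter(z')=|L(u)|$; subtracting the closed form at $w=z'$ from the last identity shows that $\sum_{v\in child(z'):\,L_{t'}(v)\not\subseteq L(u)}|L(u)\cap L_{t'}(v)|=0$, and since every term is nonnegative, each such child $v$ satisfies $L_{t'}(v)\cap L(u)=\emptyset$. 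Thus every child subtree of $z'$ is either contained in $L(u)$ or disjoint from it, and $L(u)\subseteq L_{t'}(z')$, so $z'$ witnesses that $L(u)$ is compatible with $t'$.

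For the ($\Rightarrow$) direction, suppose $L(u)$ is compatible with $t'$, witnessed by a node $w$ (so $L(u)\subseteq L_{t'}(w)$ and every child subtree of $w$ is contained in or disjoint from $L(u)$); then $w$ is a (weak) ancestor of $z'$. If $w=z'$, the witness property is exactly what we need at $z'$. Otherwise let $v^\star$ be the child of $w$ on the $w$--$z'$ path, so $L(u)\subseteq L_{t'}(z')\subseteq L_{t'}(v^\star)$; then $L_{t'}(v^\star)\cap L(u)=L(u)\ne\emptyset$, so the witness property at $w$ forces $L_{t'}(v^\star)\subseteq L(u)$, whence $L_{t'}(z')=L(u)$. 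But then $L_{t'}(v)\subseteq L_{t'}(z')=L(u)$ for every child $v$ of $z'$, so again every child subtree of $z'$ is contained in $L(u)$. In either case the closed form of the first step together with the identity of the second gives $counter(z')=\sum_{v\in child(z')}|L(u)\cap L_{t'}(v)|=|L(u)|$.

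The step I expect to require the most care is the opening of the ($\Rightarrow$) direction: a witness of compatibility need not be $z'$ itself, so one must argue it can always be pushed down to $z'$. This works precisely because $z'$, being the least common ancestor, is the smallest node whose leaf set still contains $L(u)$, so any strictly higher witness already forces $L_{t'}(z')=L(u)$ and hence makes $z'$ a (trivial) witness as well. Everything else is bookkeeping on top of Claim~\ref{lem:upperboundcounter}.
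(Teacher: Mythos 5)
Your proof is correct, and its backbone---Claim~\ref{lem:upperboundcounter} together with the decomposition of $|L(u)|$ over the children of $z'$---is the same as the paper's. The backward direction ($counter(z')=|L(u)|$ implies compatibility) is essentially the paper's argument, just packaged as a single algebraic cancellation. Where you genuinely diverge is the forward direction. The paper proves, by structural induction over the entire subtree of $t'$ rooted at $z'$, that $counter(v)=|L(u)\cap L_{t'}(v)|$ for every vertex $v$ below $z'$; you instead upgrade Claim~\ref{lem:upperboundcounter} to the pointwise characterization ``$counter(v)=|L_{t'}(v)|$ iff $L_{t'}(v)\subseteq L(u)$'' and apply the counter recursion only at the children of $z'$, which is shorter and avoids the induction entirely. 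More importantly, you close a gap the paper leaves open: the paper's argument silently assumes the node witnessing compatibility is $z'$ itself (it invokes ``since $L(u)$ is compatible, $L_{t'}(w)\subseteq L(u)$'' for a child $w$ of $z'$ meeting $L(u)$, which only follows if $z'$ is the witness), whereas you explicitly show that any witness strictly above $z'$ forces $L_{t'}(z')=L(u)$ and hence makes $z'$ a witness as well. That witness-relocation step is exactly the care this lemma needs, and your handling of the degenerate case $|L(u)|=1$ is a detail the paper omits.
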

\begin{proof}
Note that after the call to {\sc UpdateCounterSubtree($u$)}, we have called {\sc UpdateCounterLeaf($v$)} for every leaf $v$ in $L(u)$.  From Claim~\ref{lem:upperboundcounter}, we only need to consider the case when $counter(z')\leq |L(u)|$.

The algorithm maintains vertex $z$, which is the vertex closest to the root that {\sc UpdateCounterLeaf} has touched.  
We first consider the case that $z=z'$.

We show that if $counter(z')=|L(u)|$, then for each child $w\in children(z')$, $L_{t'}(w)\cap L(u)=\emptyset$ or $L_{t'}(w)\subseteq L(u)$.  This implies that $L(u)$ is compatible with $t'$.

We note that $z'$ is not a leaf.   Consider each $w\in children(z')$.  We only need to consider $w$ such that $L_{t}(w)\cap L(u)\neq\emptyset$.  The only way $counter(z')=|L(u)|$ is when $w$ is ``complete'', i.e., $counter(w) = |L_{t'}(w)|$.  Since $counter(w)\leq |L(u)\cap L_{t'}(w)|$ (from Claim~\ref{lem:upperboundcounter}), we know that $L_{t'}(w)\subseteq L(u)$.

On the other hand, if $L(u)$ is compatible with $t'$, we show that $counter(z')=|L(u)|$.  We prove a stronger statement: if $L(u)$ is compatible with $t'$ for every vertex $v\in t'$ in the subtree rooted at $z'$, 
\[
counter(v)=|L(u)\cap L_{t'}(v)|,
\]
i.e., the upper bound in Claim~\ref{lem:upperboundcounter} attains its maximum.  To do so, we prove inductively on the structure of $t'$.  Clearly, the claim is true when $v$ is a leaf.  Consider vertex $v\neq z'$ in the subtree of $t'$ rooted at $z'$.  If $L_{t'}(v)\cap L(u)=\emptyset$, $counter(v)=0$; thus the property follows.  Now, consider $v$ such that $L_{t'}(v)\cap L(u)\neq \emptyset$.  Let $w$ be a child of $z'$ such that $v$ belongs to subtree rooted at $w$.  Since $L_{t'}(v)\cap L(u)\neq\emptyset$ and $L_{t'}(v)\subseteq L_{t'}(w)$, we know that $L_{t'}(w)\cap L(u)\neq\emptyset$.  Since $L(u)$ is compatible with $t'$, we have that
\[
L_{t'}(w)\subseteq L(u),
\]
implying that $L_{t'}(v)\subseteq L(u)$; thus $counter(v)=|L_{t'}(v)|=|L_{t'}(v)\cap L(u)|$, from Claim~\ref{lem:upperboundcounter}.

Finally, consider $z'$.  Note that since $z'$ is the common ancestor of leaves in $L(u)$, $L_{t'}(z')\supseteq L(u)$.  For each child $w$ of $z'$, when $L_{t'}(w)\cap L(u)\neq\emptyset$, $w$ is complete and propagate $|L_{t'}(w)\cap L(u)|$ to $counter(z')$.  Summing all children of $z'$, we have that $counter(z')=|L(u)|$.

This completes the proof of the lemma.

\end{proof}

\begin{lemma}
Tree compatibility condition works
\end{lemma}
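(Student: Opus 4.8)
We read the statement as asserting that the test actually performed by the algorithm is correct: once {\sc UpdateCounterSubtree}($u$) has run, {\sc IsCompatible}($t',u,z$) --- i.e.\ the check $counter(z)=|L(u)|$, where $z$ is the minimum-depth vertex returned by the {\sc UpdateCounterLeaf} calls --- answers YES if and only if $L(u)$ is compatible with the current $t'$. The key lemma already proves this with $z$ replaced by $z'$, the least common ancestor of $L(u)$ in $t'$, but the algorithm only ever computes $z$, and in general $z\neq z'$ (for example, when $L_{t'}(z')=L(u)$ one has $z=par(z')$). So the plan is, first, to pin down $z$ purely in terms of the structure of $t'$ and the set $L(u)$, and then to use that description, together with Claim~\ref{lem:upperboundcounter} and the key lemma, to show that $counter(z)=|L(u)|$ holds exactly when $counter(z')=|L(u)|$ does. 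The delicate part is the first step: while {\sc UpdateCounterLeaf} runs, a vertex's counter is only a partial count, so a walk can halt well below where one would naively expect, and the work is to show that the shallowest halting point over all leaves of $L(u)$ is nevertheless the specific vertex named below.

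For the first step, put $S=\{\,w\in t' : L_{t'}(w)\subseteq L(u)\,\}$. This family is closed under passing to descendants, so it is a disjoint union of subtrees of $t'$; let $\rho^{*}$ be the root of the shallowest of those subtrees, i.e.\ the shallowest maximal element of $S$. Excluding the degenerate case $L(u)=L(R(t'))$ (a trivial bipartition, present in every tree, which the main loop should skip at $R(T)$), I would show $z=par(\rho^{*})$. Two observations drive this. First, by Claim~\ref{lem:upperboundcounter} a vertex $w$ reaches $counter(w)=size(w)$ exactly when every leaf of $L_{t'}(w)$ has been added, and from that moment $counter(w)$ is frozen and $w$ is never touched again; hence the sequence of leaf additions is monotone. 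Second, fix any maximal $\rho\in S$ and look at the walk started from the \emph{last} leaf of $L_{t'}(\rho)$ to be processed: by monotonicity every strict ancestor of that leaf lying in $S$ has already been completed by the time the walk reaches it, so the walk climbs all the way to $par(\rho)$ and stops there, $par(\rho)\notin S$ being never complete. Hence every vertex returned by a {\sc UpdateCounterLeaf} call is either $par(\rho)$ for a maximal $\rho\in S$, or a vertex strictly inside some maximal subtree of $S$ and so at depth at least $depth(\rho^{*})$; the minimum-depth one is therefore $par(\rho^{*})$.

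For the reduction, note that every maximal element of $S$ is a descendant of $z'$: its subtree contains a leaf of $L(u)$, all of $L(u)$ lies under $z'$, and no strict ancestor of $z'$ can be in $S$ since its leaf set strictly contains $L_{t'}(z')\supseteq L(u)$. This leaves three cases. If $\rho^{*}$ is a child of $z'$, then $z=par(\rho^{*})=z'$ and there is nothing to prove. If $\rho^{*}=z'$, then $L_{t'}(z')=L(u)$, so every child of $z'$ lies in $S$ and contributes its full size and $counter(z')=size(z')=|L(u)|$; moreover $z=par(z')$, whose only child in $S$ is $z'$, so $counter(z)=size(z')=|L(u)|$ as well, and both tests say YES --- consistent with $L(u)$ being the cluster at the edge above $z'$, hence already in $C(t')$ (the single-leaf case falls here, $z'$ being a leaf). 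If instead $\rho^{*}$ lies strictly below some child of $z'$, then no child of $z'$ is in $S$ --- such a child would be a maximal element of $S$ shallower than $\rho^{*}$ --- so $counter(z')=0<|L(u)|$, and $z=par(\rho^{*})$ is a strict descendant of $z'$, so $L_{t'}(z)$ does not contain all of $L(u)$ (as $z'$ is the deepest vertex whose leaf set does) and $counter(z)\le|L_{t'}(z)\cap L(u)|<|L(u)|$ by Claim~\ref{lem:upperboundcounter}; both tests say NO. In every case $counter(z)=|L(u)|$ if and only if $counter(z')=|L(u)|$, so the test on $z$ is equivalent to the test on $z'$, and the key lemma finishes the proof. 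The only real obstacle, as noted, is establishing $z=par(\rho^{*})$ against the partial-count behaviour of the walks; the rest is bookkeeping with Claim~\ref{lem:upperboundcounter} and keeping the two boundary cases ($L(u)$ the full leaf set, or a single leaf) out of the way.
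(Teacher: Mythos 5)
The paper gives no proof of this lemma at all: the statement is followed only by an informal sentence, and the substantive argument (the preceding key lemma) establishes the compatibility criterion at $z'$, the least common ancestor of $L(u)$ in $t'$, whereas {\sc IsCompatible} actually tests the counter at $z$, the shallowest vertex returned by the {\sc UpdateCounterLeaf} walks. Your proposal correctly identifies this discrepancy as the real content of the lemma and supplies the missing bridge: you characterize $z$ structurally as the parent of the shallowest maximal vertex $\rho^{*}$ with $L_{t'}(\rho^{*})\subseteq L(u)$, and then verify case by case that $counter(z)=|L(u)|$ holds exactly when $counter(z')=|L(u)|$ does, after which the key lemma applies. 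The argument is essentially correct and goes beyond what the paper provides; it also surfaces the boundary cases ($L(u)$ equal to the whole leaf set, or a single leaf) that the paper ignores. Two small repairs are needed. First, if several maximal elements of $S$ share the minimal depth, the algorithm's tie-breaking may set $z=par(\rho)$ for a different minimal-depth maximal $\rho$ rather than $\rho^{*}$; your three cases are insensitive to which one is picked (in the first case all such parents equal $z'$, in the second $\rho^{*}=z'$ is the unique minimal element, in the third every candidate is a strict descendant of $z'$), but the claim ``$z=par(\rho^{*})$'' should be stated with that slack. Second, the phrase ``every strict ancestor of that leaf lying in $S$ has already been completed by the time the walk reaches it'' is not literally true: such an ancestor $w$ contains the last-processed leaf $v'$, so it becomes complete only when the walk from $v'$ increments it; what you need, and what Claim~\ref{lem:upperboundcounter} delivers, is that all of $w$'s \emph{other} children have already propagated their full counts to $w$, so the walk's increment brings $counter(w)$ up to $size(w)$ and the climb continues. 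With those two adjustments the proof is complete.
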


From above condition, our algorithm have $add$ function $counter(u)$. for each subtree $L(u)$ have fully resolved when $counter(u) = |L(u)|$

\subsection{Running time analysis}
\label{sect:basic-running-time}

We first analyze the running time of the algorithm except the calls to {\sc UpdateCounterSubtree}.  We show that this part runs in linear time.  

We start with {\sc UpdateCounterLeaf}.

\begin{lemma}
Function {\sc UpdateCounterLeaf}($v, t'$) runs in amortized $O(1)$ time.
\label{lemma:counter-leaf}
\end{lemma}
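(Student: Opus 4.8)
The statement to prove is that {\sc UpdateCounterLeaf}($v,t'$) runs in amortized $O(1)$ time, where the amortization is taken over the $|L(u)|$ calls made inside a single invocation of {\sc UpdateCounterSubtree}($u$) (and, ultimately, over all iterations of the main loop). The natural tool here is the accounting/potential method. First I would observe that a single call to {\sc UpdateCounterLeaf} is \emph{not} $O(1)$ in the worst case: the while loop walks up the tree from the leaf $v$, and one such walk can be as long as the depth of $t'$. So the claim must be that the total work across all calls within one {\sc UpdateCounterSubtree}($u$) is $O(|L(u)|)$, and I would state the lemma's meaning in that precise amortized sense before starting.

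\textbf{Key steps.} The plan is to charge each iteration of the while loop to something that is consumed and not refunded. The crucial structural fact is the loop guard: the loop advances from $v$ to $p=par(v)$ \emph{only when} $counter(v)$ has just reached $size(v)=|L(v)|$, i.e., the subtree below $v$ has become ``complete'' with respect to the current $L(u)$. Once a vertex $v$ is complete it stays complete for the remainder of this {\sc UpdateCounterSubtree}($u$) call (counters only increase, by Claim~\ref{lem:upperboundcounter} they never exceed $|L(v)|$, so they cannot change further), hence the edge $(v,par(v))$ can be traversed at most once during the whole call. Therefore the total number of while-loop iterations over all $|L(u)|$ calls is bounded by the number of distinct edges ever traversed, which is at most the number of vertices of $t'$ lying on root-to-leaf paths to $L(u)$ — and each such traversal is ``paid for'' by the completion event of a distinct vertex. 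I would then bound the number of completed vertices: a vertex $w$ can complete only if all of $L_{t'}(w)$ lies in $L(u)$, and the completed vertices form a subforest whose leaves are among $L(u)$; a forest with $\le |L(u)|$ leaves has $O(|L(u)|)$ internal branching vertices, but degree-$1$ chains could in principle be long, so the cleanest argument is simply: the set of edges traversed forms the union of the $|L(u)|$ leaf-to-$z$ paths, and I bound the iteration count by a direct potential argument instead.

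\textbf{Cleaner potential argument.} Define, at any point during {\sc UpdateCounterSubtree}($u$), the potential $\Phi$ to be the number of vertices $w$ of $t'$ with $counter(w) = 0$ and $L_{t'}(w) \subseteq L(u)$ minus (a convenient constant times) something — actually the simplest is: give each future-completed vertex two credits at the moment the leaf-call that will eventually complete it starts. More directly: each iteration of the while loop, except possibly the last in a given call, ends by making some vertex $p$ receive a contribution that completes \emph{its own} child, so I charge one unit to the child vertex that just became complete; since a vertex becomes complete at most once per {\sc UpdateCounterSubtree} call, and the final (non-advancing) iteration of each call is charged to the leaf $v$ itself, the total is at most (number of vertices that complete) $+\ |L(u)| = O(|L(u)|)$ — using that the completing vertices are distinct and lie on paths from $L(u)$ leaves, so there are at most $2|L(u)|$ of them (leaves plus branching nodes) once we also note that no degree-$1$-in-the-traversed-subforest vertex can complete without its unique relevant child already having been counted. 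I expect the main obstacle to be exactly this last point: ruling out long uncharged chains of ``complete'' vertices, i.e., showing the traversed edge set has size $O(|L(u)|)$ rather than $O(|L(u)|\cdot depth(t'))$. I would resolve it by noting that when the walk moves from $v$ to $p$ it is because $v$ completed, and $v$ completed because its \emph{last} incomplete child completed on this very step; iterating, each while-loop step corresponds bijectively to the completion of one distinct vertex, so the iteration count is exactly the number of completions $\le$ (number of vertices $w$ with $L_{t'}(w)\subseteq L(u)$), which is $O(|L(u)|)$ since such $w$ are internal-or-leaf nodes of the minimal subtree spanning $L(u)$ after contracting, hence $\le 2|L(u)|-1$. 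Adding the $O(|L(u)|)$ for the guard-check that fails at the end of each call, {\sc UpdateCounterSubtree}($u$) does $O(|L(u)|)$ work total, i.e., {\sc UpdateCounterLeaf} is amortized $O(1)$.
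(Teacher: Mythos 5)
Your argument is correct and rests on the same key observation as the paper's own proof: each iteration of the while loop is triggered by a distinct vertex of $t'$ becoming complete, completeness is monotone within one call to {\sc UpdateCounterSubtree}, and so every iteration can be charged to a distinct completion event. The paper packages this as a potential function (the number of incomplete vertices, which drops by $k-1$ when the loop runs $k$ times), whereas you count completions in aggregate and bound them by the at most $2|L(u)|-1$ vertices $w$ with $L_{t'}(w)\subseteq L(u)$; both bookkeepings rely on internal vertices of $t'$ having at least two children, which you correctly identify as the point that rules out long uncharged chains.
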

\begin{proof}
We use the potential method.  Our data structure consists of variables $counter$ for all vertices in $t'$.   Denote the data structure at time $i$ by $D_i$.  We say that a vertex $u\in t'$ is {\em incomplete} if $0 < counter(u) < |L(u)|$.
Let potential function $\Phi(D_i)$ be the number of incomplete vertices in $t'$ time $i$.  
Using the potential method, when the data structure changes from $D_{i-1}$ to $D_i$, the amortized cost of an operation is $\hat{c} = c + \Delta\Phi$, where $c$ is an actual cost, and $\Delta\Phi = \Phi(D_i)-\Phi(D_{i-1})$.   Let $D_0$ be initial data structure after {\sc ClearCounter} is called; thus $\Phi(D_0)=0$.  Note that $\Phi(D_i)\geq \Phi(D_0)=0$ for any $i$.

When invoking {\sc UpdateCounterLeaf} at time $i$, let $k$ be number of times the while loop in Lines 2 - 6 is executed.  Clearly, the actual cost $c$ of the operation is $k+1$.  Let $\Delta\Phi = \Phi(D_i)-\Phi(D_{i-1})$. 

We claim that $\Phi(D_i) = \Phi(D_{i-1})-k+1$, i.e., the number of incomplete vertices decreases by $k-1$.  Let $v'$ be the actual leaf that {\sc UpdateCounterLeaf} is called on.  Note that each time the loop is executed, $counter(v) = size(v) = |L(v)|$.  Except when $v=v'$, previously at time $i-1$, we know that $0 < counter(v) < |L(v)|$, because $v$ is an internal vertex with at least 2 children; hence $v$ was incomplete at time $i-1$.  Since $counter(v)=|L(v)|$ at time $i$, $v$ is no longer incomplete; thus the number of incomplete vertices decreases by $k-1$ as claimed.

Thus the amortized cost $\hat{c}=k+1 + \Delta\Phi = k+1 + (-k+1)=2=O(1)$.

\end{proof}

We now analyze the running time of {\sc UpdateCounterSubtree}.

\begin{lemma}
Function {\sc UpdateCounterSubtree}($u$) runs in time $O(|L(u)|)$.
\label{lem:subtree-runtime}
\end{lemma}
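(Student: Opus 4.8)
The plan is to bound the running time of \textsc{UpdateCounterSubtree}($u$) by splitting its cost into two parts: the bookkeeping done directly in the body of \textsc{UpdateCounterSubtree} (the loop over $L(u)$, the comparisons of depths, and the tracking of $z$), and the cost of the $|L(u)|$ calls to \textsc{UpdateCounterLeaf}. The first part is clearly $O(|L(u)|)$, since the loop runs exactly $|L(u)|$ iterations and each iteration does a constant amount of work beyond the subroutine call. So the whole statement reduces to showing that the total cost of the $|L(u)|$ calls to \textsc{UpdateCounterLeaf} made during this invocation is $O(|L(u)|)$.

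Here is where I would invoke Lemma~\ref{lemma:counter-leaf}: each call to \textsc{UpdateCounterLeaf} has amortized cost $O(1)$, with the potential $\Phi$ being the number of incomplete vertices in $t'$. The key point is that \textsc{ClearCounter} is called at the start of the main-loop iteration for $u$ (before \textsc{UpdateCounterSubtree}($u$)), so at the moment \textsc{UpdateCounterSubtree}($u$) begins, $\Phi = 0$; this is the $D_0$ of the amortized analysis. Summing the amortized cost over the $|L(u)|$ calls gives a total actual cost of at most $\sum \hat{c} - \Phi_{\text{final}} + \Phi_{\text{initial}} = 2|L(u)| - \Phi_{\text{final}} + 0 \le 2|L(u)|$, since $\Phi \ge 0$ always. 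Hence the calls to \textsc{UpdateCounterLeaf} contribute $O(|L(u)|)$ in total, and adding the $O(|L(u)|)$ from the loop body yields the claimed bound.

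I would also want to say a word about how $L(u)$ is enumerated — the loop "\textbf{for} each node $v \in L(u)$" must itself be implementable in $O(|L(u)|)$ time, which holds if the algorithm walks the subtree of $u$ in $T$ (that subtree has $O(|L(u)|)$ nodes since $T$ is, for this subproblem, effectively binary / of bounded degree, or at least the relevant traversal visits $O(|L(u)|)$ vertices). If the paper's model already assumes leaves of a subtree can be listed in linear time this is immediate; otherwise it should be noted as part of the data-structure assumptions.

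The main obstacle, and the only subtlety worth care, is making sure the amortized analysis is applied over exactly the right window: the potential must be reset to $0$ by \textsc{ClearCounter} \emph{before} the first \textsc{UpdateCounterLeaf} of this subtree call, and no other operations touch $counter$ in between. Since the main loop does \textsc{ClearCounter}() immediately followed by \textsc{UpdateCounterSubtree}($u$), and \textsc{UpdateCounterSubtree} does nothing to $counter$ except through \textsc{UpdateCounterLeaf}, this is satisfied, and the telescoping of amortized costs is valid on the nose. Everything else is routine.
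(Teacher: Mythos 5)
Your proof is correct and follows the same route as the paper: both reduce the bound to Lemma~\ref{lemma:counter-leaf} and sum the amortized $O(1)$ costs over the $|L(u)|$ calls to {\sc UpdateCounterLeaf}, relying on the potential being zero after {\sc ClearCounter} and nonnegative throughout so the telescoping bounds the actual cost. Your version is simply more explicit than the paper's one-line argument (and the remark about enumerating $L(u)$ in linear time is a reasonable bookkeeping point, not a departure in approach).
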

\begin{proof}
From Lemma~\ref{lemma:counter-leaf}, it is clear that {\sc UpdateCounterSubtree} runs in time $O(|L(u)|)=O(n)$ and it is invoked for $O(n)$ time.  Therefore, the total running time of the function is $O(n^2)$.  Combining the two parts, we get that the algorithm runs in $O(n^2)$.
\end{proof}

\subsection{Updating \texorpdfstring{$t$}{}}
\label{sect:update}

In this section, we show that when the bipartition defined by $L(u)$ is compatible with $t'$, we can update $t'$ to include that bipartition efficiently in time $deg(u)$.  When $t'$ is compatible with a bipartition defined by $L(u)$ for $u\in T$, to update $t'$ we have to create a new child of $z$ that consists of only children of $z$ that corresponds to the bipartition $L(u)$.  Note that these children are those ``full'' $counter$ that also propagate the counter to $z$.  Therefore, when a child propagate a counter to any vertex, we keep a list of them.  When we need to update $t'$ at $z$, we can take every vertex in this list, and create a new child $z'$ of $z$ with these vertices as $z'$'s children and update their counter accordingly.  This can be done in time $O(deg(u))$.


\section{The faster algorithm: heavy child optimization}
\label{sect:faster-alg}

In this section, we describe a simple method to speed up the algorithm from the previous section.  Note that the only bottle-neck to a nearly linear time algorithm is the counting procedure.

As a preprocessing, we assume that for each vertex $u\in T$, we know $|L(u)|$.  This can be computed in $O(n)$ time.  The improved algorithm is described below.

\begin{algorithm}[H]
	\caption{{\sc Solve}($u$)} 
	\begin{algorithmic}
	    \If{$u$ is not leaf}
	        \State let $c_1,c_2$ be children of $u$  \Comment{There are exactly two children, since $T$ is binary}
	        \If{$|L(c_1)| > |L(c_2)|$}
	            \State swap node $c_1$ and node $c_2$
	        \EndIf
	        \State {\sc Solve}($c_1$) \Comment{Solve smaller subtree}
	        \State {\sc ClearCounter()}
	        \State {\sc Solve}($c_2$) \Comment{Solve larger subtree}
	    \EndIf
	    \State $z \leftarrow$ {\sc UpdateCounterSubtree}($c_1$) \Comment{Update the counter for the smaller subtree}
        \If{{\sc IsCompatible}($t, u, z$)}
            \State update $t'$
        \EndIf
	\end{algorithmic} 
\end{algorithm}

To see that this algorithm is the correct implementation of the Main loop, we essentially need to show that at the end of {\sc Solve}($u$), variable $counter$ is exactly equal to variable $counter$ right after the ``Update Step'' in Line 3 in the Main loop while processing $u$, i.e., variable $counter$ is exactly equal to the case when every leaf in $L(u)$ has been added while no other leaves have been added.   This can be shown by induction on the calls of {\sc Solve}.  We omit the proof in this version of the manuscript.

We are left to analyze its running time.

\begin{theorem}
The algorithm {\sc Solve} runs in $O(n \log n)$ time.
\end{theorem}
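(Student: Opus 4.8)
The plan is to analyze the recursion {\sc Solve} using the standard ``small-to-large'' (heavy path) argument. First I would establish the cost accounting: in each call to {\sc Solve}($u$), the work done directly — beyond the two recursive calls — consists of one {\sc ClearCounter} (amortized $O(1)$ since it only resets the dirty list), one call to {\sc UpdateCounterSubtree}($c_1$) on the \emph{smaller} child, and the $O(\deg(u))=O(1)$ work of {\sc IsCompatible} and the possible update of $t'$ (recall $T$ is binary, and the update step from Subsection~\ref{sect:update} costs $O(\deg(u))$). By Lemma~\ref{lem:subtree-runtime}, {\sc UpdateCounterSubtree}($c_1$) runs in time $O(|L(c_1)|)$, and since $c_1$ is chosen as the child with $|L(c_1)|\le |L(c_2)|$, we have $|L(c_1)|\le |L(u)|/2$. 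So the non-recursive cost charged at $u$ is $O(|L(c_1)|)$ where $c_1$ is the light child.

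Next I would bound the total: $\sum_{u\in V(T)} O(|L(\text{light child of }u)|)$. The key step is the classical counting lemma: each leaf $\ell$ of $T$ is contained in $L(\text{light child of }u)$ for at most $O(\log n)$ vertices $u$, because each time this happens the size of the subtree containing $\ell$ drops by a factor of at least $2$ as we walk from $\ell$ up toward the root (when $\ell$ is in the light child $c_1$ of $u$, $|L(c_1)|\le |L(u)|/2$). Summing over all $n$ leaves gives $\sum_u |L(\text{light child of }u)| = O(n\log n)$, hence the total work outside of the amortized-$O(1)$ operations is $O(n\log n)$.

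The one subtlety — and the part I'd treat most carefully — is that Lemma~\ref{lemma:counter-leaf} gives only an \emph{amortized} $O(1)$ bound on {\sc UpdateCounterLeaf}, and this amortization is reset every time {\sc ClearCounter} is called (the potential argument needs $\Phi(D_0)=0$ at the start of each clear-to-clear block). So I must check that the amortized analysis composes correctly across the many {\sc ClearCounter} calls that {\sc Solve} performs. This is fine: partition the execution into maximal blocks between consecutive {\sc ClearCounter} calls; within each block the potential starts and stays at $\ge 0$, so the total actual cost of the {\sc UpdateCounterLeaf} calls in that block is at most twice the number of such calls in that block; summing over all blocks, the total actual cost of all {\sc UpdateCounterLeaf} invocations is $O(1)$ times the total number of invocations, which is exactly $\sum_u |L(c_1^{(u)})| = O(n\log n)$ by the heavy-child count above. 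The number of {\sc ClearCounter} calls is $O(n)$, each $O(1)$ amortized (or $O(1)$ plus the size of its dirty list, which is again subsumed in the {\sc UpdateCounterLeaf} count).

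Putting the pieces together: the preprocessing to compute all $|L(u)|$ is $O(n)$; the leaf-to-counter work totals $O(n\log n)$ by the small-to-large bound; all the per-node bookkeeping ({\sc ClearCounter}, {\sc IsCompatible}, updating $t'$) totals $O(n)$; and the data-structure updates to $t'$ total $O(n)$ since each update costs $O(\deg(u))$ and $\sum_u \deg(u)=O(n)$. Hence {\sc Solve} runs in $O(n\log n)$ time. I expect the main obstacle to be stating the small-to-large charging argument cleanly in the presence of the amortization reset — making explicit that ``amortized $O(1)$ per call, summed over all calls'' survives the block decomposition — rather than any genuinely hard inequality.
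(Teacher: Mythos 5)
Your proposal is correct and follows essentially the same argument as the paper: charge the $O(|L(c_1)|)$ cost of {\sc UpdateCounterSubtree} on the light child to its leaves, observe that each leaf lies in at most $O(\log n)$ light subtrees, and fold all remaining per-call work into $O(n)$. Your extra care about how the amortized bound of Lemma~\ref{lemma:counter-leaf} composes across {\sc ClearCounter} resets (the block decomposition of the potential argument) is a detail the paper glosses over, but it does not change the route of the proof.
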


\begin{proof}
Note that the running time for all other operations in {\sc Solve} is $O(1)$ per invocation.  Since {\sc Solve} is called for $O(n)$ time, the total running time of these operations is $O(n)$.  Also, the running time of {\sc ClearCounter()} can be amortized to the running time of {\sc UpdateCounterSubtree}, where the counters are updated.

Therefore, we are left to analyze the running time of {\sc UpdateCounterSubtree}.

Note that {\sc UpdateCounterSubtree}($u$) for $u\in T$ runs in time linearly in the number of leaves, $|L(u)|$, from Lemma~\ref{lem:subtree-runtime}.  Hence, we can charge the cost to these leaves.  

We analyze the running time by counting the number of times each leaf is involved in this charging scheme.  Note that we only call {\sc UpdateCounterSubtree} at $c_1$, which is the lighter subtree.  Clearly, each leaf $u$ belongs to at most $O(\log n)$ light subtrees; hence, it is charged by at most $O(\log n)$ time.  Summing all leaves, we have that the total running time for {\sc UpdateCounterSubtree} is $O(n\log n)$.
\end{proof}

\section{Acknowledgements}

We would like to thank Pawel Gawrychowski and Oren Weiman for pointing out our erroneous claim and also give us reference to Sung's result~\cite{Sung18}.  As mentioned earlier, Jittat Fakcharoenphol who help advising the author on this manuscript would like to take full responsibility for this mistake.
We would like to thank Jittat Fakcharoenphol for suggesting this problem to work on and for his help in editing this manuscript.

\bibliographystyle{unsrt}
\bibliography{main}

\end{document}